\newcommand{\bra}[1]{\left< #1 \right|} 
\newcommand{\ket}[1]{\left| #1 \right>}
\newtheorem{theorem}{Theorem}
\newenvironment{proof}[1][Proof:]{\begin{trivlist} \item[\hskip \labelsep {\bfseries #1}]}{\end{trivlist}}
\newcommand{\qed}{\nobreak \ifvmode \relax \else
      \ifdim\lastskip<1.5em \hskip-\lastskip
      \hskip1.5em plus0em minus0.5em \fi \nobreak
      \vrule height0.75em width0.5em depth0.25em\fi}
\begin{document}

\title{General rules for bosonic bunching in multimode interferometers}

\author{Nicol\`o Spagnolo}
\affiliation{Dipartimento di Fisica, Sapienza Universit\`{a} di Roma,
Piazzale Aldo Moro 5, I-00185 Roma, Italy}

\author{Chiara Vitelli}
\affiliation{Center of Life NanoScience @ La Sapienza, Istituto
Italiano di Tecnologia, Viale Regina Elena, 255, I-00185 Roma, Italy}
\affiliation{Dipartimento di Fisica, Sapienza Universit\`{a} di Roma,
Piazzale Aldo Moro 5, I-00185 Roma, Italy}

\author{Linda Sansoni}
\affiliation{Dipartimento di Fisica, Sapienza Universit\`{a} di Roma,
Piazzale Aldo Moro 5, I-00185 Roma, Italy}

\author{Enrico Maiorino}
\affiliation{Dipartimento di Fisica, Sapienza Universit\`{a} di Roma,
Piazzale Aldo Moro 5, I-00185 Roma, Italy}

\author{Paolo Mataloni}
\affiliation{Dipartimento di Fisica, Sapienza Universit\`{a} di Roma,
Piazzale Aldo Moro 5, I-00185 Roma, Italy}
\affiliation{Istituto Nazionale di Ottica (INO-CNR), Largo E. Fermi 6, I-50125 Firenze, Italy}

\author{Fabio Sciarrino}
\email{fabio.sciarrino@uniroma1.it}
\affiliation{Dipartimento di Fisica, Sapienza Universit\`{a} di Roma,
Piazzale Aldo Moro 5, I-00185 Roma, Italy}
\affiliation{Istituto Nazionale di Ottica (INO-CNR), Largo E. Fermi 6, I-50125 Firenze, Italy}

\author{Daniel J. Brod}
\affiliation{Instituto de F\'isica, Universidade Federal Fluminense, Av. Gal. Milton Tavares de Souza s/n, Niter\'oi, RJ, 24210-340, Brazil}

\author{Ernesto F. Galv\~{a}o}
\email{ernesto@if.uff.br}
\affiliation{Instituto de F\'isica, Universidade Federal Fluminense, Av. Gal. Milton Tavares de Souza s/n, Niter\'oi, RJ, 24210-340, Brazil}

\author{Andrea Crespi}
\affiliation{Istituto di Fotonica e Nanotecnologie, Consiglio
Nazionale delle Ricerche (IFN-CNR), Piazza Leonardo da Vinci, 32,
I-20133 Milano, Italy}
\affiliation{Dipartimento di Fisica, Politecnico di Milano, Piazza
Leonardo da Vinci, 32, I-20133 Milano, Italy}

\author{Roberta Ramponi}
\affiliation{Istituto di Fotonica e Nanotecnologie, Consiglio
Nazionale delle Ricerche (IFN-CNR), Piazza Leonardo da Vinci, 32,
I-20133 Milano, Italy}
\affiliation{Dipartimento di Fisica, Politecnico di Milano, Piazza
Leonardo da Vinci, 32, I-20133 Milano, Italy}

\author{Roberto Osellame}
\email{roberto.osellame@polimi.it}
\affiliation{Istituto di Fotonica e Nanotecnologie, Consiglio
Nazionale delle Ricerche (IFN-CNR), Piazza Leonardo da Vinci, 32,
I-20133 Milano, Italy}
\affiliation{Dipartimento di Fisica, Politecnico di Milano, Piazza
Leonardo da Vinci, 32, I-20133 Milano, Italy}

\begin{abstract}
We perform a comprehensive set of experiments that characterize bosonic bunching of up to 3 photons in interferometers of up to 16 modes. Our experiments verify two rules that govern bosonic bunching. The first rule, obtained recently in \cite{Aaronson10, Arkhipov11}, predicts the average behavior of the bunching probability and is known as the bosonic birthday paradox. The second rule is new, and establishes a $n!$-factor quantum enhancement for the probability that all $n$ bosons bunch in a single output mode, with respect to the case of distinguishable bosons. Besides its fundamental importance in phenomena such as Bose-Einstein condensation, bosonic bunching can be exploited in applications such as linear optical quantum computing and quantum-enhanced metrology.
\end{abstract}

\maketitle

Bosons and fermions exhibit distinctly different statistical behaviors. For fermions, the required wave-function anti-symmetrization results in the Pauli exclusion principle, exchange forces and, ultimately, in all the main features of electronic transport in solids. Bosons, on the other hand, tend to occupy the same state more often than classical particles do. This bosonic bunching behavior is responsible for fundamental phenomena such as Bose-Einstein condensation, which has been observed in a large variety of bosonic systems \cite{Klaers2010,Anders1995,Davis95}. In quantum optics, a well-known bosonic bunching effect is the Hong-Ou-Mandel two-photon coalescence \cite{HOM87} observed in balanced beam-splitters as well as in  multimode linear optical interferometers \cite{Ou06, Liu10, Peru2011, Spag2012, Crespi2012, Broome2013, Spring2013,Till2012}, and which can be generalized to a larger number of particles \cite{Ou99,Niu09,Tichy10}. In such a process, two indistinguishable photons impinging on the input ports of a balanced beam-splitter will exit from the same output port, while distinguishable photons have a non-zero probability of exiting from different ports. Besides its importance for tests on the foundations of quantum mechanics \cite{Pan12}, bosonic bunching is useful in applications such as quantum-enhanced metrology \cite{Giov2011} and photonic quantum computation \cite{Kok2007}.

\begin{figure*}[ht!]
\includegraphics[width=0.85\textwidth]{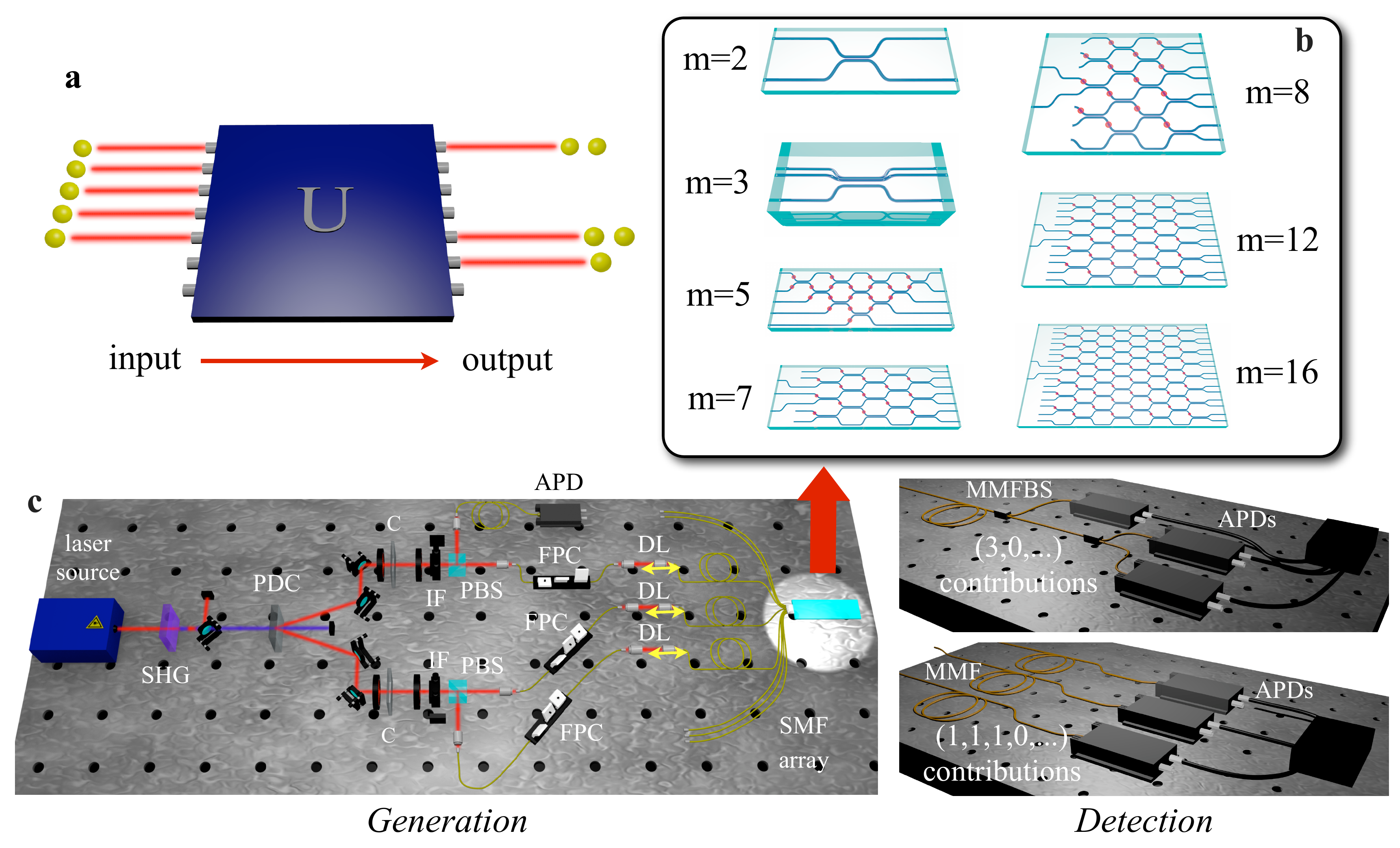}
\caption{{\bf Experimental platform for photonic bunching experiments.} {\bf a}, Input bosons evolve via a linear interferometer described by a $m \times m$ unitary $U$. A bunching event results when two (or more) bosons emerge from the same output port. {\bf b}, Architectures of the integrated linear optical interferometers exploited in the experimental verification. Red spots represent phase shifters while the directional couplers perform the beam splitter trasformation, whose reflectivity can be varied by modulating the coupling coefficient in the interaction region. {\bf c}, Experimental layout evidencing the generation and detection of single photons. (Legend - SHG: second harmonic generation, PDC: parametric down-conversion, C: walk-off compensation, IF: interference filter, PBS: polarizing beam-splitter, FPC: fiber polarization compensator, DL: delay line, SMF: single-mode fiber, MMF: multi-mode fiber, MMFBS: multi-mode fiber beam-splitter, APD: avalanche photodiode, Black-box: electronic coincidence apparatus). See \cite{SuppInf} for more experimental details.}
\label{fig:BP_exp}
\end{figure*}

In this Letter we report on the experimental verification of two general rules that govern the bosonic bunching behavior. One of them, the bosonic full-bunching rule, is theoretically proposed and proven in this Letter. The experimental verification of these rules is achieved with a comprehensive set of experiments that characterize the bosonic bunching of photons as they exit a number of multimode interferometers. Our experiments involve inputting $n$ photons (both distinguishable and indistinguishable) in different input ports of a $m$-mode linear interferometer (described by an $m \times m$ unitary $U$), and measuring the probability of each possible output distribution.

Let us now state the two bosonic bunching rules that we experimentally verify in this work. 
\begin{theorem} \textbf{(Average bosonic bunching probability \cite{Aaronson10, Arkhipov11}.)}
For an ensemble of uniformly-drawn $m$-mode random interferometers, the average probability that two or more bosons (out of the $n$ input bosons) will exit in the same output port is given by:
\begin{equation}
p_b(n,m)=1-\prod_{a=0}^{n-1}\frac{1-a/m}{1+a/m}.
\end{equation}
\end{theorem}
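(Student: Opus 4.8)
The plan is to turn the probabilistic statement into a single second-moment integral over the unitary group and evaluate that integral using the irreducibility of the symmetric power of the defining representation of $U(m)$. First I would invoke the standard linear-optics dictionary. Taking (without loss of generality) the $n$ input photons to occupy the first $n$ input modes, one photon each, the probability of observing an output occupation pattern $S=(s_1,\dots,s_m)$, $\sum_j s_j=n$, is $\Pr(S\,|\,U)=|\operatorname{Perm}(U_S)|^2/\prod_j s_j!$, where $U_S$ is the $n\times n$ matrix obtained from the first $n$ columns of $U$ and from row $j$ repeated $s_j$ times. A bunching event is exactly the complement of a collision-free outcome, i.e.\ of an $S$ with every $s_j\in\{0,1\}$; for such $S$ one has $\prod_j s_j!=1$ and $U_S=U_{R,[n]}$, the submatrix of $U$ on an $n$-element row set $R$ and the first $n$ columns. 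Hence $1-p_b(n,m)=\mathbb{E}_U\big[\sum_{|R|=n}|\operatorname{Perm}(U_{R,[n]})|^2\big]$.

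Next I would use the permutation invariance of the Haar measure: left- and right-multiplying $U$ by permutation matrices permutes output and input modes, respectively, and leaves both the measure and $|\operatorname{Perm}(\cdot)|$ unchanged, so $\mathbb{E}_U|\operatorname{Perm}(U_{R,[n]})|^2$ is one and the same number $q$ for every $n$-subset $R$. Since there are $\binom{m}{n}$ such subsets, $1-p_b(n,m)=\binom{m}{n}\,q$ with $q=\mathbb{E}_U|\operatorname{Perm}(U_{[n],[n]})|^2$, the expected squared modulus of the permanent of the top-left $n\times n$ block. Evaluating $q$ is the crux of the argument. The clean route: write the permanent in first quantization as $\operatorname{Perm}(U_{[n],[n]})=\langle\hat w|\,U^{\otimes n}\,|\hat w\rangle$, where $|\hat w\rangle$ is the unit vector of the symmetric subspace $\mathrm{Sym}^n(\mathbb{C}^m)$ representing the Fock state with one particle in each of the first $n$ modes (this uses only that $U^{\otimes n}$ commutes with the symmetrizer, so inserting the symmetric projector produces the sum over $S_n$ defining the permanent). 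Because $U\mapsto U^{\otimes n}|_{\mathrm{Sym}^n}$ is an irreducible representation of $U(m)$ of dimension $D=\dim\mathrm{Sym}^n(\mathbb{C}^m)=\binom{m+n-1}{n}$, Schur orthogonality for compact groups gives $\mathbb{E}_U|\langle a|\rho(U)|b\rangle|^2=\|a\|^2\|b\|^2/D$ for any vectors $a,b$, whence $q=1/D$. (Equivalently, Weingarten calculus reduces the moment to $n!\sum_{\sigma\in S_n}\mathrm{Wg}(\sigma,m)=n!\,(m-1)!/(m+n-1)!$, the same number.)

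Assembling the pieces, $1-p_b(n,m)=\binom{m}{n}\big/\binom{m+n-1}{n}=\dfrac{m!\,(m-1)!}{(m-n)!\,(m+n-1)!}=\prod_{a=0}^{n-1}\dfrac{m-a}{m+a}=\prod_{a=0}^{n-1}\dfrac{1-a/m}{1+a/m}$, which rearranges into Eq.~(1). As consistency checks I would verify the edge case $n>m$ (no collision-free outcomes, $\binom{m}{n}=0$, and the product contains the vanishing factor $1-m/m$, so both sides give $p_b=1$), and the $n=2$, $m=2$ case, where the formula yields $p_b=2/3$, reproducing the Hong--Ou--Mandel dip on average. The main obstacle is the second-moment evaluation $\mathbb{E}_U|\operatorname{Perm}(U_{[n],[n]})|^2=1/\binom{m+n-1}{n}$: everything before and after it is bookkeeping, and the only content is either the representation-theoretic identification above or, equivalently, the Weingarten sum $\sum_{\sigma\in S_n}\mathrm{Wg}(\sigma,m)=(m-1)!/(m+n-1)!$.
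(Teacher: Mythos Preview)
Your argument is correct, but note that the paper itself does \emph{not} prove Theorem~1: it merely states the result and attributes it to Refs.~[1,2] (Aaronson--Arkhipov and Arkhipov--Kuperberg). The only proof in the paper is that of Theorem~2. So there is nothing to compare your proposal against within the paper.

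That said, your proof is sound and is essentially the same computation as in the cited references, phrased in slightly different language. The key identity you use---that $\mathbb{E}_U\big[|\langle S|U_F|G\rangle|^2\big]=1/D$ with $D=\binom{m+n-1}{n}$ for \emph{every} pair of Fock states $S,G$---is exactly the ``bosonic birthday'' observation: on Haar average each of the $D$ output occupation patterns is equally likely, so the collision-free probability is $\binom{m}{n}/\binom{m+n-1}{n}$. Your derivation via Schur orthogonality on the irreducible $\mathrm{Sym}^n(\mathbb{C}^m)$ is a clean way to see this, and your Weingarten remark is an equivalent route. One small point: your ``without loss of generality'' reduction to one boson per input mode is justified by more than permutation invariance---the Schur/twirling argument you give actually shows the Haar-averaged collision-free probability equals $\mathrm{tr}(P_{\mathrm{cf}})/D$ for \emph{any} unit input vector in $\mathrm{Sym}^n(\mathbb{C}^m)$, so the formula holds for arbitrary input occupation numbers, consistent with the paper's remark that the average pertains to ``any chosen input state.'' Your edge-case checks ($n>m$ and $n=m=2$) are fine.
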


By uniformly-drawn, we mean unitaries picked randomly according to the unique Haar uniform distribution \cite{Zyck94} over $m \times m$ unitaries $U$; the average bunching probability pertains to the output obtained by any chosen input state evolving in this ensemble of interferometers. This rule was obtained recently in a study of a generalization of the classical birthday paradox problem to the case of bosons  \cite{Aaronson10, Arkhipov11}.

\begin{theorem} \textbf{(Full-bunching bosonic probability ratio}.)
Let $g_k$ denote the occupation number  of input mode $k$. Let us denote the probabilities that all $n$ bosons leave the interferometer in mode $j$ by $q_c(j)$ (distinguishable bosons) and  $q_q(j)$ (indistinguishable bosons). Then the ratio of full-bunching probabilities $r_{fb}=q_q(j)/q_c(j)=n!/\prod_k{g_k!}$, independently of $U,m,j$.
\end{theorem}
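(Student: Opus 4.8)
The plan is to compute $q_q(j)$ and $q_c(j)$ separately in closed form in terms of the entries $U_{jk}$, and then observe that the dependence on $U$ cancels in the ratio. First I would write the input Fock state as $\ket{\mathrm{in}}=\big(\prod_k (a_k^\dagger)^{g_k}/\sqrt{g_k!}\big)\ket{0}$, where $a_k^\dagger$ creates a photon in input mode $k$. The interferometer acts in the Heisenberg picture by $a_k^\dagger\mapsto\sum_l U_{lk}\,b_l^\dagger$, so the output state is $\big(\prod_k(\sum_l U_{lk}b_l^\dagger)^{g_k}/\sqrt{g_k!}\big)\ket{0}$.

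Next I would project onto the fully-bunched output $\ket{n}_j=(b_j^\dagger)^n\ket{0}/\sqrt{n!}$. In the expanded product above, the only way to obtain a term proportional to $(b_j^\dagger)^n$ is to select the $l=j$ summand in every one of the factors, which yields a single amplitude $\sqrt{n!}\,\prod_k U_{jk}^{g_k}/\prod_k\sqrt{g_k!}$; squaring its modulus gives
\begin{equation}
q_q(j)=\frac{n!}{\prod_k g_k!}\,\prod_k|U_{jk}|^{2g_k}.
\end{equation}
For distinguishable particles each photon propagates independently, a photon entering mode $k$ is found in output $j$ with probability $|U_{jk}|^2$, and full bunching requires all $g_k$ photons from each input mode $k$ to reach $j$, so $q_c(j)=\prod_k|U_{jk}|^{2g_k}$. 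Dividing, $r_{fb}=q_q(j)/q_c(j)=n!/\prod_k g_k!$, which no longer involves $U$, $m$ or $j$.

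I expect the only point needing genuine care to be the combinatorial bookkeeping in the bosonic calculation: one must check that cross terms mixing different output modes never contribute to the fully-bunched amplitude, and that the input normalizations $\sqrt{g_k!}$ and the output normalization $\sqrt{n!}$ are tracked consistently. Equivalently, one can read the amplitude off the standard permanent formula for linear-optical transition amplitudes, specialized to the fully-bunched output, where the relevant $n\times n$ submatrix of $U$ has all rows equal and hence permanent $n!\prod_k U_{jk}^{g_k}$. Beyond that the argument is elementary; it is worth emphasizing that the conclusion is stronger than a Haar-average statement of the type in Theorem~1, since the ratio is fixed for every individual interferometer, which is precisely what makes it convenient to test experimentally.
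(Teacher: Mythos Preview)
Your proof is correct and follows essentially the same route as the paper: compute $q_q(j)$ and $q_c(j)$ explicitly, observe that the full-bunching output picks out a single term (equivalently, the relevant $n\times n$ matrix has all rows equal so its permanent is $n!\prod_k U_{jk}^{g_k}$), and take the ratio. The only cosmetic difference is that you lead with the Heisenberg-picture mode map $a_k^\dagger\mapsto\sum_l U_{lk}b_l^\dagger$ while the paper starts from the permanent formula for transition amplitudes---a route you yourself mention as equivalent.
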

\begin{proof} As described in \cite{Aaronson10} for example,  the $m \times m$ unitary $U$ describing the interferometer induces a unitary $U_F$ acting on the Hilbert space of $n$ photons in $m$ modes; the probability amplitude associated with input $\ket{G}=\ket{g_1 g_2 ... g_m}$ and output $\ket{H} = \ket{h_1 h_2 ... h_m}$ is given by
\begin{equation} \label{permanent}
\bra{H} U_F\ket{G} = \frac{\textrm{per}(U_{G,H})}{\sqrt{g_1! .. g_m! h_1! .. h_m!}},
\end{equation}
where $U_{G,H}$ is the matrix obtained by repeating $g_i$ times the $i^{th}$ row of $U$, and $h_j$ times its $j^{th}$ column \cite{Scheel04}, and per$(A)$ denotes the permanent of matrix $A$ \cite{Valiant79}.

The probability that a single boson entering mode $j$ will exit in mode $i$ is $|U_{i,j}|^2$, as is easy to check \cite{Aaronson10}. Then a simple counting argument gives the probability $p_{G,H}$ that distinguishable bosons will enter the interferometer with occupation numbers $g_1 g_2 ... g_m$ and leave with occupation numbers $h_1 h_2 ... h_m$:
\begin{equation}\label{dist}
p_{G,H}=\textrm{per}(|U_{G,H}|^2)/(\prod_i{(h_i}!)),
\end{equation}
where $|U_{G,H}|^2$ is the matrix obtained by taking the absolute value squared of each corresponding element of $U_{G,H}$.

Let us now introduce an alternative, convenient way of representing the input occupation numbers. Define a $n$-tuple of $m$ integers $r_i$ so that the first $g_1$ integers are 1, followed by a sequence of $g_2$ 2's, and so on until we have $g_m$ $m$'s. As an example, input occupation numbers $g_1=2, g_2=1, g_3=0, g_4=3$ would give $r=(1,1,2,4,4,4)$. 
Using Eq. (\ref{permanent}) we can evaluate the probability $q_q(j)$ that the $n$ indistinguishable bosons will all exit in mode $j$:
\begin{equation}
q_q(j)=|\textrm{per}(A)|^2/(n!\prod_k{(g_k!)}),
\end{equation}
where $A$ is a $n \times n$ matrix with elements $A_{i,k}=U_{j,r_k}$. Since all rows of $A$ are equal, $\textrm{per}(A)$ is a sum of $n!$ identical terms, each equal to $\prod_k U_{j,r_k}$. Hence
\begin{equation}
\begin{aligned}
q_q(j)&=|n! \prod_k U_{j,r_k}|^2/(n!\prod_k{(g_k!)})= \\
&= n! |\prod_k U_{j,r_k}|^2/\prod_k{(g_k!)}.
\end{aligned}
\end{equation}

Using Eq. (\ref{dist}), we can calculate the probability $q_c(j)$ that $n$ distinguishable bosons will leave the interferometer in mode $j$: $q_c(j)=\textrm{per}(B)/n!$, where $B$ has elements $B_{i,k}=|A_{i,k}|^2=|U_{j,r_k}|^2$. Hence
\begin{equation}
q_c(j)=n! \prod_k |U_{j,r_k}|^2/n!= \prod_k |U_{j,r_k}|^2.
\end{equation}
Our new bosonic full-bunching rule establishes the value of the quantum/classical full-bunching ratio, which we can now calculate to be
\begin{equation}
r_{fb}=q_q(j)/q_c(j)=n!/\prod_k{(g_k!)}.
\end{equation} \qed
\end{proof}

Our new bosonic full-bunching rule generalizes the Hong-Ou-Mandel effect into a universal law, now applicable to any interferometer and any number of photons $n$. Despite becoming exponentially rare as $n$ increases \cite{Aaro12}, full-bunching events are enhanced by a factor as high as $n!$ when at most one boson is injected into each input mode, as in our photonic experiments.

\begin{figure*}[ht!]
\includegraphics[width=1.\textwidth]{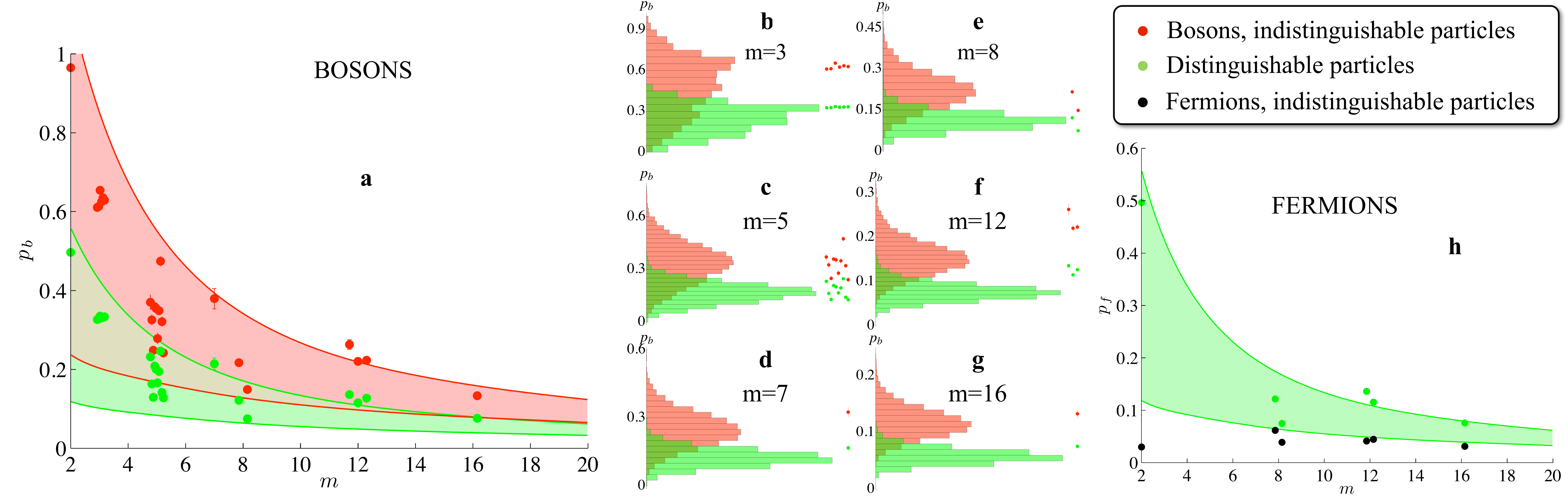}
\caption{{\textbf{Two-photon photonic bunching data.}} {\bf a}, Bunching probability $p_{b}$ as a function of the number of modes $m$ for two indistinguishable photons ($p_{b}^{(q)}$, red points) and two distinguishable photons ($p_{b}^{(c)}$, green points). We performed experiments with different unitaries ($m=3$, $m=8$, $m=12$) or different input states ($m=3$, $m=5$). Shaded areas correspond to the interval $[\overline{p_{b}} - 1.5 \, \sigma; \overline{p_{b}} + 1.5 \, \sigma]$ obtained with a numerical sampling over $10000$ uniformly random unitaries, $\overline{p_{b}}$ being the average bunching probability and $\sigma$ its standard deviation. Red area: indistinguishable photons. Green area: distinguishable photons.  {\bf b-g}, Experimental results (points) together with histograms showing the distribution of the bunching probabilities obtained with the numerical simulation. {\bf h}, Results for the bunching probability $p_{f}$ as a function of the number of modes $m$ for two indistinguishable fermions ($p_{f}^{(q)}$, black points) and two distinguishable particles ($p_{f}^{(c)}$, green points). Non-zero bunching probabilities have to be attributed to imperfections in the state preparation. Error bars in the experimental data are due to the Poissonian statistics of the measured events, and where not visible are smaller than the symbol.
}
\label{twophotons}
\end{figure*}

For our experiments, we fabricated integrated optical interferometers in a borosilicate glass by femtosecond laser waveguide writing \cite{gattass2008flm,dellavalle2009mpd}. This technique consists in a direct inscription of waveguides in the volume of the transparent substrate, exploiting the nonlinear absorption of focused femtosecond pulses to induce a permanent and localized increase in the refractive index. Single photons may jump between waveguides by evanescent coupling in regions where waveguides are brought close together; precise control of the coupling between the waveguides and of the photon path length, enabled by a 3D waveguide design \cite{Crespi2012}, provides arbitrary interferometers with different topologies (Fig. \ref{fig:BP_exp} {\bf b}).  Randomness is purposefully incorporated in our interferometer designs in various ways, for example by choosing balanced couplings together with random phase shifters, or even by decomposing a uniformly chosen unitary into arbitrary couplers and phase shifters which are then inscribed on the chip; for more details on the different architectures see \cite{SuppInf}.

Our inputs are Fock states of two or three individual photons obtained by a type-II parametric down-conversion (PDC) source (Fig. \ref{fig:BP_exp} {\bf c}). Three-photon input states result from the second-order PDC process, with the fourth photon used as a trigger. As described in \cite{Spag2012,Crespi2012,SuppInf}, the three-photon state is well-modeled as a mixture of two indistinguishable photons and a distinguishable one (probability $1- \alpha^2$), and three indistinguishable photons (probability $\alpha^2$), thus defining the indistinguishability parameter $\alpha$, which we estimated to be $\alpha=0.63 \pm 0.03$ using a standard Hong-Ou-Mandel experiment. Controllable delays between the input photons are used to change the regime from classical distinguishability to quantum, bosonic indistinguishability.

A first set of experiments aimed at measuring the bunching probabilities $p_b^{(q)}$ and  $p_b^{(c)}$ respectively of quantum (i.e. indistinguishable) and classical (i.e. distinguishable) photons after each interferometer. We note that these probabilities depend both on the interferometer's design and the input state used. A bunching event involves, by definition, the overlap of at least two photons in a single output mode. The classical bunching probability $p_b^{(c)}$ is obtained from single-photon experiments that characterize the transition probabilities between each input/output combination. To measure $p_b^{(q)}$, we set up experiments with $n$ input photons (each entering a different mode), and detected rates of $n$-fold coincidences of photons coming out in $n$ different modes of each chip.  As in a standard Hong-Ou-Mandel measurement, each experimental run was done in identical conditions and for the same time interval, varying only the delays that make the particles distinguishable or not, and so give us an estimate of the ratio $t\equiv(1-p_b^{(q)})/(1-p_b^{(c)})$. Together with our measured $p_b^{(c)}$, this allowed us to estimate the bunching probability for indistinguishable photons $p_b^{(q)}=1-t(1-p_b^{(c)})$.

We summarize the experimental results for a number of different photonic chips in Figs. \ref{twophotons} {\bf a-g} (two-photon experiments) and Fig. \ref{threephotons} (three-photon experiments). The results are in good agreement with theory, taking into account the partial indistinguishability of the photon source \cite{Tsuj04}. The shaded regions indicate the average bunching behavior obtained numerically from 10000 unitaries sampled from the uniform, Haar distribution. For all the employed interferometers we find that indistinguishable photons display a higher coincidence rate than distinguishable photons do ($p_b^{(q)}>p_b^{(c)}$); this is known to be true for averages \cite{Arkhipov11}. Furthermore, $p_b^{(q)}$ falls as $m$ increases, as predicted in \cite{Aaronson10, Arkhipov11}. This latter result is somewhat counter-intuitive, given the bunching behavior of bosons, and has been referred to as the bosonic birthday paradox \cite{Aaronson10}; in fact, it was shown that both $p_b^{(q)}$ and the bunching probability associated with a classical, uniform distribution decay with the same asymptotic behavior as $m$ increases \cite{Arkhipov11}.

\begin{figure}[ht!]
\includegraphics[width=8.5cm]{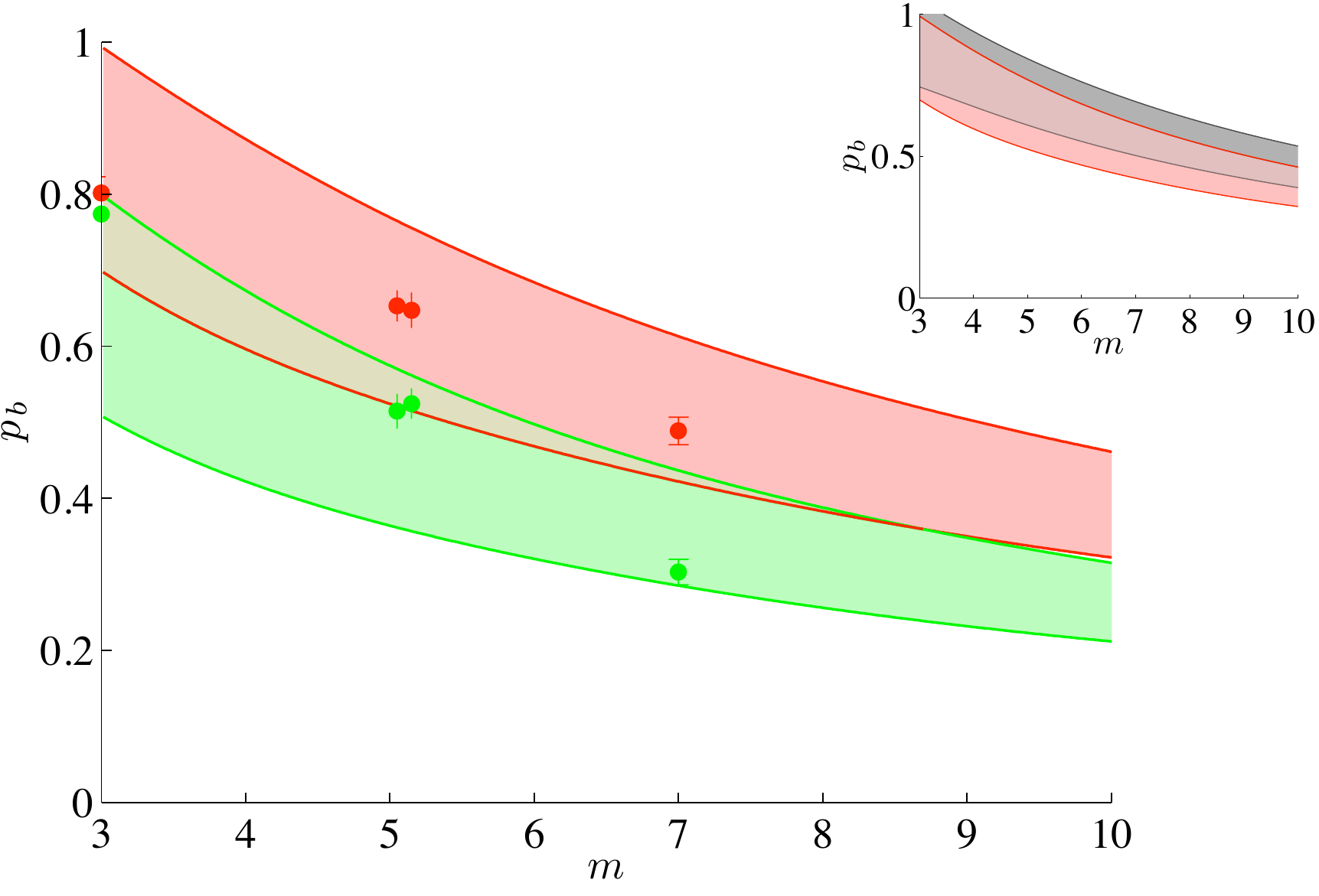}
\caption{{\textbf{Three-photon photonic bunching data.}} Experimental results for the three-photon photonic bunching experiments as a function of the number of modes $m$ for three indistinguishable photons ($p_{b}^{(q)}$, red points) and three distinguishable photons ($p_{b}^{(c)}$, green points). Shaded areas correspond to the interval $[\overline{p_{b}} - 1.5 \, \sigma; \overline{p_{b}} + 1.5 \sigma]$ with a numerical sampling over $10000$ uniformly random unitaries, $\overline{p_{b}}$ being the average bunching probability and $\sigma$ its standard deviation. Red area: simulation taking into account the partial indistinguishability parameter $\alpha$ of our source. Green area: three distinguishable photons. Error bars in the experimental data are due to the Poissonian statistics of the measured events, and where not visible are smaller than the symbol. Inset: numerical simulation of the effect of photon distinguishability on the bunching probability $p_{b}$. Grey area: perfectly indistinguishable photons ($\alpha=1$).}
\label{threephotons}
\end{figure}

It is interesting to compare this photonic bunching behavior with what is expected from fermions, since the Pauli exclusion principle forbids fermionic bunching. Two-particle fermionic statistics may be simulated by exploiting the symmetry of two-photon wave-functions in an additional degree of freedom \cite{Matt2011,Sans2012}. For this purpose, we injected the interferometers with two photons, in an anti-symmetric polarization-entangled state, in two different input ports \cite{Sans2012,Crespi2013}. The results are shown in Fig. \ref{twophotons} {\bf h}, where a suppression of the bunching probability can be observed for the case of simulated indistinguishable fermions. 

\begin{figure}[ht!]
\includegraphics[width=0.49 \textwidth]{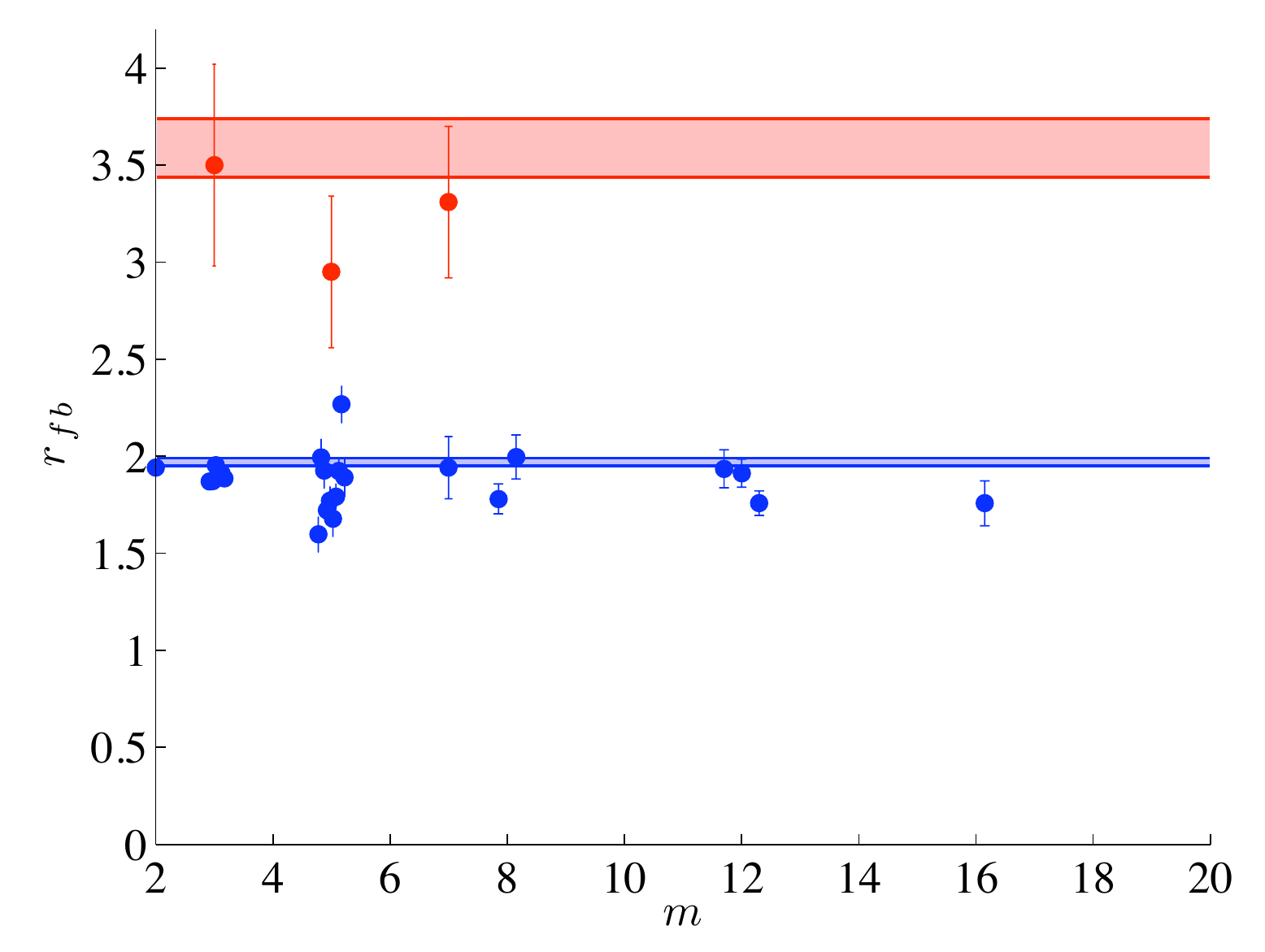}
\caption{\textbf{Ratio $r_{fb}$ between quantum and classical full-bunching probabilities.} Here we report $r_{fb}$ for two- and three-photon experiments on a number of photonic chips. Blue data: $r_{fb}$ for two-photon experiments in chips with different number of modes $m$. Blue area: expected value for two-photon full-bunching ratio. Two-photon measurements have been carried out in two different regimes, where the measured photon indistinguishability was $\beta \simeq 0.95$ (for the interferometers with $m=3,5,7$) and $\beta \simeq 0.99$ (for the interferometers with $m=2,8,12,16$). The corresponding values of the expected two-photon bunching ratio are given by $r_{fb}=\beta 2!+(1-\beta)(2-1)!$, and ranges between $r_{fb}\simeq 1.95$ and $r_{fb}\simeq 1.99$. Red data: $r_{fb}$ for three-photon experiments in various random chips. Red area: expected value for three-photon full-bunching ratio taking into account the partial indistinguishability parameter $\alpha$. Error bars in the experimental data are due to the Poissonian statistics of the measured events, and where not visible are smaller than the symbol.}
\label{nfactorial}
\end{figure}

We now turn to experiments that test our bosonic full-bunching rule. We estimated the quantum/classical full-bunching probability ratio $r_{fb}$ by introducing delays to change the distinguishability regime, and performing photon counting measurements in selected output ports, using fiber beam-splitters and multiple single-photon detectors. In Fig. \ref{nfactorial} (blue data) we plot the \mbox{(full-)}bunching ratio for all two-photon experiments referred to in Fig. \ref{twophotons}, and find good agreement with the predicted quantum enhancement factor of $2!=2$. Note that in two-photon experiments every bunching event is also a full bunching event, which means that when $n=2$ the ratio $r_{fb}=r_b=2$, independently of the number of modes $m$.

We have also measured  three-photon, full-bunching probabilities in random interferometers with number of modes $m=3,5,7$. Perfectly indistinguishable photons would result in the predicted $3!=6$-fold quantum enhancement for full-bunching probabilities. The partial indistinguishability  $\alpha=0.63 \pm 0.03$ of our three injected photons reduces this quantum enhancement to a factor $r_{fb}=\alpha^2 \, 3! + (1-\alpha^2) \, (3-1)!=3.59 \pm 0.15$.  The results can be seen in Fig. \ref{nfactorial} (red data), showing good agreement with the predicted value.

In conclusion, our experiments characterize the bunching behavior of up to three photons evolving in a variety of integrated multimode circuits. Our results are in agreement with the recent predictions of \cite{Aaronson10, Arkhipov11}, regarding the average bunching behavior of bosons in random interferometers. We have also proved a new rule that sharply discriminates quantum and classical behavior, by focusing on events in which all photons exit the interferometer bunched in a single mode. We have obtained experimental confirmation also of this new full-bunching law. Besides its fundamental importance in the description of bosonic quantum systems, the bunching behavior of bosons we studied here can be exploited in contexts ranging from quantum computation to quantum metrology \cite{Lucke2011}.

\textbf{Acknowledgements.} This work was supported by the ERC-Starting Grant 3D-QUEST (3D-Quantum Integrated Optical Simulation; grant agreement no. 307783): http://www.3dquest.eu. D.B. and E.G. acknowledge support from the Brazilian National Institute for Science and Technology of Quantum Information (INCT-IQ/CNPq). We acknowledge support from Giorgio Milani and Sandro Giacomini.


%

\end{document}


\title{General rules for bosonic bunching in multimode interferometers\\
- Supplementary Material -}

\author{Nicol\`o Spagnolo}
\affiliation{Dipartimento di Fisica, Sapienza Universit\`{a} di Roma,
Piazzale Aldo Moro 5, I-00185 Roma, Italy}

\author{Chiara Vitelli}
\affiliation{Center of Life NanoScience @ La Sapienza, Istituto
Italiano di Tecnologia, Viale Regina Elena, 255, I-00185 Roma, Italy}
\affiliation{Dipartimento di Fisica, Sapienza Universit\`{a} di Roma,
Piazzale Aldo Moro 5, I-00185 Roma, Italy}

\author{Linda Sansoni}
\affiliation{Dipartimento di Fisica, Sapienza Universit\`{a} di Roma,
Piazzale Aldo Moro 5, I-00185 Roma, Italy}

\author{Enrico Maiorino}
\affiliation{Dipartimento di Fisica, Sapienza Universit\`{a} di Roma,
Piazzale Aldo Moro 5, I-00185 Roma, Italy}

\author{Paolo Mataloni}
\affiliation{Dipartimento di Fisica, Sapienza Universit\`{a} di Roma,
Piazzale Aldo Moro 5, I-00185 Roma, Italy}
\affiliation{Istituto Nazionale di Ottica (INO-CNR), Largo E. Fermi 6, I-50125 Firenze, Italy}

\author{Fabio Sciarrino}
\affiliation{Dipartimento di Fisica, Sapienza Universit\`{a} di Roma,
Piazzale Aldo Moro 5, I-00185 Roma, Italy}
\affiliation{Istituto Nazionale di Ottica (INO-CNR), Largo E. Fermi 6, I-50125 Firenze, Italy}

\author{Daniel J. Brod}
\affiliation{Instituto de F\'isica, Universidade Federal Fluminense, Av. Gal. Milton Tavares de Souza s/n, Niter\'oi, RJ, 24210-340, Brazil }

\author{Ernesto F. Galv\~{a}o}
\affiliation{Instituto de F\'isica, Universidade Federal Fluminense, Av. Gal. Milton Tavares de Souza s/n, Niter\'oi, RJ, 24210-340, Brazil }

\author{Andrea Crespi}
\affiliation{Istituto di Fotonica e Nanotecnologie, Consiglio
Nazionale delle Ricerche (IFN-CNR), Piazza Leonardo da Vinci, 32,
I-20133 Milano, Italy}
\affiliation{Dipartimento di Fisica, Politecnico di Milano, Piazza
Leonardo da Vinci, 32, I-20133 Milano, Italy}

\author{Roberta Ramponi}
\affiliation{Istituto di Fotonica e Nanotecnologie, Consiglio
Nazionale delle Ricerche (IFN-CNR), Piazza Leonardo da Vinci, 32,
I-20133 Milano, Italy}
\affiliation{Dipartimento di Fisica, Politecnico di Milano, Piazza
Leonardo da Vinci, 32, I-20133 Milano, Italy}

\author{Roberto Osellame}
\affiliation{Istituto di Fotonica e Nanotecnologie, Consiglio
Nazionale delle Ricerche (IFN-CNR), Piazza Leonardo da Vinci, 32,
I-20133 Milano, Italy}
\affiliation{Dipartimento di Fisica, Politecnico di Milano, Piazza
Leonardo da Vinci, 32, I-20133 Milano, Italy}

\maketitle

\begin{figure}[ht!]
\centering
\renewcommand{\figurename}{{\bf Supplementary Figure}}
\includegraphics[width=0.66\textwidth]{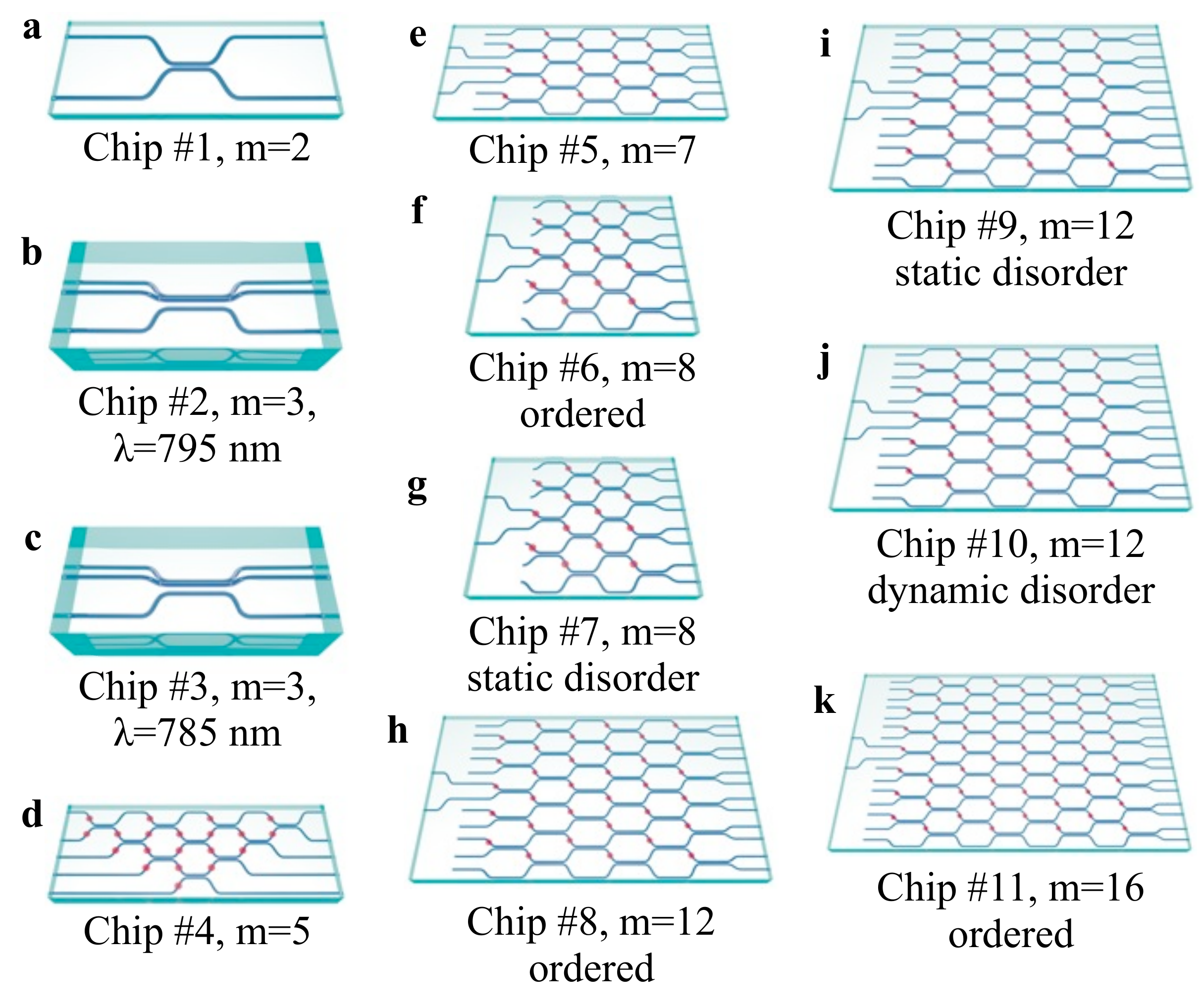}
\caption{{\bf Integrated interferometers architectures}. Schemes of the internal structures of the tested integrated devices. Red spots represent phase shifters. {\bf a}, Two-mode symmetric directional coupler (see Ref. \cite{Sans2010}). {\bf b}, Three-dimensional tritter tested with photons at $795$ nm (see Ref. \cite{Spag2012}). {\bf c}, Three-dimensional tritter tested with photons at $785$ nm (see Ref. \cite{Spag2012}). {\bf d}, Arbitrary random $5$-mode interferometer (see Ref. \cite{Crespi2012}). {\bf e}, Random $7$-mode, $5$-layer interferometer with arbitrary phases and symmetric ($T=0.5$) directional couplers. {\bf f}, $8$-mode, $4$-layer quantum walk with equal phases and symmetric directional couplers (see Ref. \cite{Sans2012}). {\bf g}, $8$-mode, $4$-layer interferometer with symmetric directional couplers and static disorder encoded in the phase pattern (shown in Ref. \cite{Crespi2013}). {\bf h}, $12$-mode, $6$-layer interferometer with equal phases and symmetric directional couplers (see Ref. \cite{Crespi2013}). {\bf i}, $12$-mode, $6$-layer interferometer with symmetric directional couplers and static disorder encoded in the phase pattern (shown in Ref. \cite{Crespi2013}). {\bf j}, $12$-mode, $6$-layer interferometer with symmetric directional couplers and dynamic disorder encoded in the phase pattern (shown in Ref. \cite{Crespi2013}). {\bf k}, $16$-mode, $8$-layer quantum walk with equal phases and symmetric directional couplers (see Ref. \cite{Crespi2013}).}
\label{fig:S1}
\end{figure}

\newpage

\begin{table}[ht!]
\renewcommand{\tablename}{{\bf Supplementary Table}}
\footnotesize
\begin{tabular}{|c|c|c||c|c||c|c||c|c||c|}
\hline
m & Chip $\#$ &Input modes & $p^{(q)}_{b}$, $U_{t}$ ($U_{r}$) & $p^{(q)}_{b}$, exp & $p^{(c)}_{b}$, $U_{t}$ ($U_{r}$) & $p^{(c)}_{b}$, exp & $r_{fb}$, $U_{t}$ ($U_{r}$) & $r_{fb}$, exp &  Laser mode\\
\hline
\hline
2 & 1 & (1,2) & 1 & $0.965 \pm 0.003$& 0.5 & $0.497 \pm 0.031$& 2 & $1.943 \pm 0.004$ & CW\\
\hline
3 & 2 & (1,2) & 0.667 ($0.6636 \pm 0.0006$) & $0.625 \pm 0.005$ & 0.333 ($0.3319 \pm 0.0003$) & $0.3304 \pm 0.0008$ & 2 & $1.89 \pm 0.02$ & PW\\
\hline
3 & 2 & (1,3) & 0.667 ($0.6656 \pm 0.0004$) & $0.635 \pm 0.004$ & 0.333 ($0.3330 \pm 0.0002$) & $0.3319 \pm 0.0008$ & 2 & $1.91 \pm 0.02$ & PW\\
\hline
3 & 2 & (2,3) & 0.667 ($0.6661 \pm 0.0003$) & $0.629 \pm 0.006$ & 0.333 ($0.3332 \pm 0.0002$) & $0.3335 \pm 0.0008$ & 2 & $1.89 \pm 0.02$ & PW\\
\hline
3 & 3 & (1,2) & 0.667 ($0.652 \pm 0.001$) & $0.610 \pm 0.003$ & 0.333 ($0.3261 \pm 0.0006$) & $0.326 \pm 0.002$ & 2 & $1.87 \pm 0.02$ & PW\\
\hline
3 & 3 & (1,3) & 0.667 ($0.652 \pm 0.001$) & $0.613 \pm 0.002$ & 0.333 ($0.3262 \pm 0.0006$) & $0.327 \pm 0.002$ & 2 & $1.87 \pm 0.02$ & PW\\
\hline
3 & 3 & (2,3) & 0.667 ($0.6718 \pm 0.0005$) & $0.654 \pm 0.003$ & 0.333 ($0.3360 \pm 0.0003$) & $0.335 \pm 0.002$ & 2 & $1.95 \pm 0.02$ & PW\\
\hline
5 & 4 & (1,2) & 0.542 ($0.375 \pm 0.005$) & $0.370 \pm 0.019$ & 0.271 ($0.188 \pm 0.002$) & $0.232 \pm 0.005$ & 2 & $1.60 \pm 0.09$ & PW\\
\hline
5 & 4 & (1,3) & 0.269 ($0.411 \pm 0.005$) & $0.325 \pm 0.014$ & 0.134 ($0.206 \pm 0.002$) & $0.163 \pm 0.003$ & 2 & $1.99 \pm 0.09$ & PW\\
\hline
5 & 4 & (1,4) & 0.206 ($0.296 \pm 0.004$) & $0.248 \pm 0.011$ & 0.103 ($0.148 \pm 0.002$) & $0.129 \pm 0.003$ & 2 & $1.93 \pm 0.09$ & PW\\
\hline
5 & 4 & (1,5) & 0.188 ($0.251 \pm 0.004$) & $0.359 \pm 0.013$ & 0.094 ($0.125 \pm 0.002$) & $0.208 \pm 0.004$ & 2 & $1.72 \pm 0.07$ & PW\\
\hline
5 & 4 & (2,3) & 0.388 ($0.350 \pm 0.005$) & $0.354 \pm 0.013$ & 0.194 ($0.175 \pm 0.002$) & $0.200 \pm 0.003$ & 2 & $1.77 \pm 0.07$ & PW\\
\hline
5 & 4 & (2,4) & 0.304 ($0.390 \pm 0.004$) & $0.278 \pm 0.014$ & 0.152 ($0.195 \pm 0.002$) & $0.166 \pm 0.003$ & 2 & $1.68 \pm 0.09$ & PW\\
\hline
5 & 4 & (2,5) & 0.303 ($0.373 \pm 0.004$) & $0.349 \pm 0.011$ & 0.152 ($0.186 \pm 0.002$) & $0.194 \pm 0.004$ & 2 & $1.79 \pm 0.07$ & PW\\
\hline
5 & 4 & (3,4) & 0.122 ($0.280 \pm 0.004$) & $0.474 \pm 0.012$ & 0.061 ($0.140 \pm 0.002$) & $0.246 \pm 0.005$ & 2 & $1.92 \pm 0.06$ & PW\\
\hline
5 & 4 & (3,5) & 0.239 ($0.293 \pm 0.005$) & $0.321 \pm 0.011$ & 0.112 ($0.146 \pm 0.003$) & $0.141 \pm 0.002$ & 2 & $2.27 \pm 0.09$ & PW\\
\hline
5 & 4 & (4,5) & 0.653 ($0.517 \pm 0.003$) & $0.241 \pm 0.012$ & 0.326 ($0.259 \pm 0.002$) & $0.128 \pm 0.002$ & 2 & $1.89 \pm 0.10$ & PW\\
\hline
7 & 5 & (3,4) & 0.386 & $0.325 \pm 0.019$ & 0.193 & $0.167 \pm 0.010$ & 2 & $1.94 \pm 0.16$ & PW\\
\hline
8 & 6 & (4,5) & $0.212$ & $0.217 \pm 0.006$ & $0.106$ & $0.122 \pm 0.004$ & 2 & $1.78 \pm 0.08$ & CW\\
\hline
8 & 7 & (4,5) & $0.188$ & $0.149 \pm 0.006$ & $0.094$ & $0.075 \pm 0.004$ & 2 & $2.00 \pm 0.11$ & CW\\
\hline
12 & 8 & (6,7) & $0.298$ & $0.263 \pm 0.012$ & $0.149$ & $0.136 \pm 0.003$ & 2 & $1.93 \pm 0.10$ & CW\\
\hline
12 & 9 & (6,7) & $0.262$ & $0.220 \pm 0.006$ & $0.131$ & $0.115 \pm 0.003$ & 2 & $1.91 \pm 0.07$ & CW\\
\hline
12 & 10 & (6,7) & $0.240$ & $0.223 \pm 0.006$ & $0.120$ & $0.127 \pm 0.003$ & 2 & $1.76 \pm 0.06$ & CW\\
\hline
16 & 11 & (8,9) & $0.113$ & $0.133 \pm 0.007$ & $0.056$ & $0.076 \pm 0.003$ & 2 & $1.76 \pm 0.12$ & CW\\
\hline
\end{tabular}
\caption{{\bf Experimental two-photon data}. Comparison between the theoretical predictions and the experimental results for the bunching probabilities ($p^{(q)}_{b}$ for indistinguishable photons, $p^{(c)}_{b}$ for distinguishable particles) and for the bunching ratios ($r_{fb}=p^{(q)}_{b}/p^{(c)}_{b}$) of two-boson experiments. Additionally, for chips (2,3,4) we report in brackets the predictions obtained with the tomographically reconstructed matrices $U_{r}$ of the devices (see Refs. \cite{Spag2012,Crespi2012} for details). Error bars in the predictions with the reconstructed matrices are due to the reconstruction errors and have been obtained by a Monte-Carlo simulation with $M=1000$ sampled unitaries. The data have been collected in two different regimes, corresponding to the two-photon source working with a continuous wave (CW, $\lambda = 810$ nm and $806$ nm) or a pulsed wave (PW, $\lambda = 785$ nm and $\lambda = 795$ nm) pump laser. In the pulsed regime, the expected two-photon visibilities are reduced by a factor $q \simeq 0.95$ with respect to the calculated value, where $q$ takes into account the imperfect indistinguishability of the two photons. This correction on the expected visibilities is not reported in the table.}
\label{tab:S1}
\end{table}

\begin{table}[ht!]
\renewcommand{\tablename}{{\bf Supplementary Table}}
\footnotesize
\begin{tabular}{|c|c|c||c|c||c|c||}
\hline
m & Chip $\#$ &Input modes & $p^{(q)}_{f}$, $U_{t}$ & $p^{(q)}_{f}$, exp & $p^{(c)}_{f}$, $U_{t}$ & $p^{(c)}_{f}$, exp \\
\hline
\hline
2 & 1 & (1,2) & 0 & $0.030 \pm 0.061$ & 0.5 & $0.497 \pm 0.031$ \\
\hline
8 & 6 & (4,5) & 0 & $0.062 \pm 0.003$ & 0.106 & $0.122 \pm 0.004$ \\
\hline
8 & 7 & (4,5) & 0 & $0.039 \pm 0.002$ & 0.094 & $0.075 \pm 0.003$ \\
\hline
12 & 8 & (6,7) & 0 & $0.041 \pm 0.003$ & 0.149 & $0.136 \pm 0.003$ \\
\hline
12 & 9 & (6,7) & 0 & $0.045 \pm 0.002$ & 0.131 & $0.115 \pm 0.003$ \\
\hline
16 & 11 & (8,9) & 0 & $0.031 \pm 0.003$ & 0.056 & $0.076 \pm 0.003$ \\
\hline
\end{tabular}
\caption{{\bf Experimental two-fermion data}. Comparison between the theoretical predictions and the experimental results for the bunching probabilities ($p^{(q)}_{f}$ for indistinguishable fermions, $p^{(c)}_{f}$ for distinguishable particles) of two-fermion experiments. All data in this case have been collected with a continuous wave (CW) source. The fermionic behavior is mimicked by injecting the integrated interferometers with a two-photon antisymmetric polarization-entangled state $\vert \psi^{-} \rangle = 2^{-1/2} (\vert H \rangle_{i} \vert V \rangle_{j} - \vert V \rangle_{i} \vert H \rangle_{j})$, where $(i,j)$ are the input modes. The observed non-zero bunching probability for two indistinguishable fermions $p^{(q)}_{f}$ is due to the imperfect preparation of the polarization state $\vert \psi^{-} \rangle$.}
\label{tab:S2}
\end{table}

\newpage

\begin{table}[ht!]
\renewcommand{\tablename}{{\bf Supplementary Table}}
\footnotesize
\begin{tabular}{|c|c|c||c|c||c|c||c|c|}
\hline
m & Chip $\#$ &Input modes & $p^{(q)}_{b}$, $U_{t}$ ($U_{r}$) & $p^{(q)}_{b}$, exp & $p^{(c)}_{b}$, $U_{t}$ ($U_{r}$) & $p^{(c)}_{b}$, exp & $r_{fb}$, $U_{t}$ ($U_{r}$) & $r_{fb}$, exp\\
\hline
\hline
3 & 3 & (1,2,3) & 0.795 ($0.8008 \pm 0.0004$) & $0.801 \pm 0.021$ & 0.778 ($0.7734 \pm 0.0004$) & $0.774 \pm 0.001$ & 3.588 & $3.50 \pm 0.52$\\
\hline
5 & 4 & (1,2,3) & 0.694 ($0.675 \pm 0.002$) & $0.653 \pm 0.020$ & 0.469 ($0.483 \pm 0.002$) & $0.515 \pm 0.023$ & 3.588 & N/A \\
\hline
5 & 4 & (2,3,4) & 0.654 ($0.635 \pm 0.002$) & $0.647 \pm 0.023$ & 0.552 ($0.535 \pm 0.002$) & $0.525 \pm 0.019$ & 3.588 & $2.89 \pm 0.39$\\
\hline
7 & 5 & (3,4,5) & 0.537 & $0.489 \pm 0.018$ & 0.331 & $0.303 \pm 0.017$& 3.588 & $3.31 \pm 0.39$\\
\hline
\end{tabular}
\caption{{\bf Experimental three-photon data}. Comparison between the theoretical predictions and the experimental results for the bunching probabilities ($p^{(q)}_{b}$ for indistinguishable photons, $p^{(c)}_{b}$ for distinguishable particles) and for the full-bunching ratios ($r_{fb}=p^{(q)}_{fb}/p^{(c)}_{fb}$) of three-boson experiments. Additionally, for the chips (3,4) we report in brackets the predictions obtained with the tomographically reconstructed matrices $U_{r}$ of the devices (see Refs. \cite{Spag2012,Crespi2012} for details). Error bars in the predictions with the reconstructed matrices are due to the reconstruction errors and have been obtained by a Monte-Carlo simulation with $M=1000$ sampled unitaries. All data in this case have been collected with a pulsed wave (PW) source with one partially distinguishable photon (see Ref. \cite{Spag2012,Crespi2012} for more details). Due to the indistinguishability factor $\alpha=0.63 \pm 0.03$, the full-bunching ratio $r_{fb}$ for all the $n=3$ photons coming out from the same port is a mixture of the $n!$ term (three indistinguishable photons) and the $(n-1)!$ one (two indistinguishable photon and one distinguishable one), and equals $r_{fb}=\alpha^2 \, n! + (1-\alpha^2) \, (n-1)!=3.59 \pm 0.15$.}
\label{tab:S3}
\end{table}

\section{Experimental details}
Two different apparata have been exploited in order to test different chip behaviours.

\textbf{1- Apparatus with cw source-} This setup has been adopted for testing the chips reported in table S1 (labelled by CW) and S2. 

Chips number $6-9-10$ have been tested by a  polarization entangled photon pairs source working at wavelength $806 nm$. The entangled photons were generated via spontaneous parametric down conversion in a $1.5 mm$  $\beta-$barium borate crystal (BBO) cut for type-II noncollinear phase matching, pumped by a cw diode laser with power $50 mW$. The complete experimental setup involved a  $\beta$-barium borate crystal (BBO) cut for type II phase matching and pumped by a CW diode laser ($\lambda=403\ nm$), which generated polarization-entangled photon pairs in the state $\ket{\psi^-}_{AB}=\frac{1}{\sqrt{2}}(\ket{H}_A\ket{V}_B-\ket{V}_A\ket{H} _B)$ at $\lambda=806\ nm$ via spontaneous parametric down conversion .  Starting from the singlet state $\ket{\psi^-}$, the other three Bell states are obtained by suitable rotation of waveplates  inserted on modes $A$ and $B$. This allowed to simulate fermionic and bosonic behaviour. Photon pairs were then delivered into the  circuit  through two single mode fibers. Standard polarization controllers  and a voltage-controlled liquid crystal device  were adopted to compensate polarization rotations induced by propagation of the beams along the fibers.
The output of the chip was collected and collimated by a first lens, divided by a bulk beam splitter. The different outputs were coupled  by two lenses  (one per branch), to multimode fibers  mounted on motorized stages. By translating the stages with respect to the beams, different outputs could be coupled into the multi mode fibers. Photons were finally detected by single photon counting modules and coincidence counts were recorded.

Measurements corresponding to the chips $1-7-8-11$ with two photons and for the fermionic case were performed analogously by a BBO parametric down conversion source at 810 nm in the cw regime. Two-photon Hong-Ou-Mandel interference in this case is $V_{(2)} \simeq 0.99$.

\textbf{2- Apparatus with pulsed source-}This setup has been adopted for testing the chips reported in table S1 (labelled by PW) and S3.

Chips $3-4-5$ have been tested by a  four photons source shined by a mode locked Coherent Mira laser ($785$  nm), with an average power of  $1.6 W$, and pumped by a cw  Coherent Verdi V18 laser, working at $12 W$  output power. The output beam was doubled in frequency by second harmonic generation ( $392.5$ nm) leading to a pump beam with $750 mW$ average power.

Chip 2 has been tested by adopting the same laser source tuned at $795nm$. 

In both cases photons were produced in the pulsed regime at $785 nm$ ($795nm$) exploiting the parametric down conversion process by pumping a 2 mm long BBO crystal by the $392.5 nm$ ($397.5 nm$) wavelength pump field. Typical count rates of the source were around $250000 Hz$ for the four signals, $40000 Hz$ for the two-fold coincidences and $20 Hz$ for the fourfold coincidences. Spectral filtering by $3 nm$ interferential filters, coupling into single-mode fibers and propagation through different delay lines were performed before coupling into the chips. Light passing from single mode fibers and the chip was performed through a fiber array (coupling efficiency mearly $0.7$), typical coupling between the fiber array and the chip was around $0.7$, and the chip trasmissivity ranged between $0.3-0.6$ depending of the internal  geometry. Delay lines were exploited before the fiber array to modulate the different interference regimes, corresponding to the cases of distinguishable and indistinguishable particles. 
 The measured visibility for two-photon Hong-Ou-Mandel interference in a symmetric 50/50 beam-splitter between photons belonging to the same pair was $V_{(2)}=0.952 \pm 0.006$. Hong-Ou-Mandel interference in a symmetric beam-splitter between photons belonging to different pairs was $V^{\prime}_{(2)}=0.63 \pm 0.03$.
With three-photon measurements, one of the four generated photons was adopted as a trigger for coincidence detection, while the other three photons were coupled inside the chip. The output modes were detected by using multimode fibers and single-photon avalanche photodiodes. Typical coupling efficiency into the multi-mode fibers was around $0.8$.  Coincidences between different detectors were used to reconstruct the probability of obtaining a given output state. The full-bunching contributions were measured by splitting in two (three) equal parts the desired output mode with fiber beam-splitters and by measuring two-fold (three-fold) coincidences.

%